\newtheorem{proposition}{Proposition}
\newtheorem{defin}{\bf Definition}
\newenvironment{proof}{\noindent{\bf Proof\,}}{$\diamond$}
\def\be{\mbox{Be}}
\def\mul{\mbox{Mul}}
\def\dir{\mbox{Dir}}
\def\E{\mbox{E}}
\def\V{\mbox{Var}}
\def\Cov{\mbox{Cov}}
\def\Cor{\mbox{Corr}}
\def\P{\mbox{P}}
\def\bc{{\bf c}}
\def\bx{{\bf x}}
\def\bF{{\bf F}}
\def\bN{{\bf N}}
\def\bX{{\bf X}}
\def\bY{{\bf Y}}
\def\simind{\stackrel{\mbox{\scriptsize{ind}}}{\sim}}
\def\simiid{\stackrel{\mbox{\scriptsize{iid}}}{\sim}}
\newcommand{\RR}{\mathbb{R}}
\newcommand{\TT}{\mathbb{T}}
\newcommand{\ZZ}{\mathbb{Z}}
\newcommand{\DP}{\mathcal{DP}}
\newcommand{\DDP}{\mathcal{DDP}}
\newcommand{\DMP}{\mathcal{DMP}}
\newcommand{\MP}{\mathcal{MP}}
\newcommand{\BB}{\mathcal{B}}
\newcommand{\XX}{\mathcal{X}}
\begin{document}

\baselineskip=24pt

\title{\bf A class of dependent Dirichlet processes via latent multinomial processes}
\author{{\sc Luis E. Nieto-Barajas} \\[2mm]
{\sl Department of Statistics, ITAM, Mexico} \\[2mm]
{\small {\tt lnieto@itam.mx}} \\}
\date{}
\maketitle

\begin{abstract}
We describe a procedure to introduce general dependence structures on a set of Dirichlet processes. Dependence can be in one direction to define a time series or in two directions to define spatial dependencies. More directions can also be considered. Dependence is induced via a set of latent processes and exploit the conjugacy property between the Dirichlet and the multinomial processes to ensure that the marginal law for each element of the set is a Dirichlet process. Dependence is characterised through the correlation between any two elements. Posterior distributions are obtained when we use the set of Dirichlet processes as prior distributions in a Bayesian nonparametric context. Posterior predictive distributions induce partially exchangeable sequences defined by generalised P\'olya urns. A numerical example to illustrate is also included. 
\end{abstract}

\vspace{0.2in} \noindent {\sl Keywords}: Bayesian nonparametrics, generalised P\'olya urn, moving average process, spatio-temporal process, stationary process.

\section{Introduction}
\label{sec:intro}

The Dirichlet process (DP), introduced by \cite{ferguson:73}, is the most important process prior in Bayesian nonparametric statistics. It is flexible enough to approximate (in the sense of weak convergence) any probability law, although the paths of the process are almost surely discrete \citep{blackwell&macqueen:73}. A DP measure $F$ is characterised by a precision parameter $c>0$ and a centring measure $F_0$ defined on $(\Omega,\BB)$, in notation $F\sim\DP(c,F_0)$. In general, for any $K>0$ and any partition $(B_1,\ldots,B_K)$ of $\Omega$, the random vector $(F(B_1),\ldots,F(B_K))\sim\dir(cF_0(B_1),\ldots,cF_0(B_K))$, that is, has a Dirichlet distribution. 

\cite{sethuraman:94} characterised the DP as an infinite sum of random jumps with probabilities $w_j$ at random locations $\theta_j$, i.e., $F(\cdot)=\sum_{j=1}^\infty w_j\delta_{\theta_j}(\cdot)$, where $\theta_j\simiid F_0$ and $\delta_\theta$ is the dirac measure at $\theta$. This characterisation is also known as stick-breaking since $w_j=v_j\prod_{k<j}(1-v_k)$ with $v_j\simiid\be(1,c)$, that is a beta distribution. 

There have been several proposals to construct dependent measures in Bayesian nonparametrics. They can be classified into three categories: stick breaking, random measures and predictive schemes. We briefly describe them here.

\cite{maceachern:99} introduced a general idea to define dependent Dirichlet processes (DDP) as a way of extending the DP model to multiple random measures $\{F_x\}_{x\in\XX}$. His idea relies on the stick-breaking representation of the DP and introduces dependence by making the probabilities and/or the locations to be a function of $x$, say $w_{x,j}$ and $\theta_{x,j}$, where $x$ is an indexing covariate and $\XX$ is a suitable space. There have been many proposals in the literature to achieve this and most of them have been summarised in \cite{quintana&al:20}. For example, \cite{rodriguez&al:10} rely on latent gaussian copula models to introduce dependence in the locations, and \cite{camerlenghi&al:19a} define the locations as Dirichlet processes or normalised random measures \citep{regazzini&al:03}.

A Dirichlet process can also be seen as a normalisation of a gamma process \citep{ferguson:74}. The more general class of normalised random measures relies on normalising increasing additive processes, where the gamma process is a particular case. A way of introducing dependence, not necessarily in Dirichlet processes, is by making the underlying (unnormalised) random measures to be dependent \citep[e.g.][]{griffin&al:13,lijoi&al:14a}. Clustering performance of a specific class of dependent normalised random measures in mixture models is studied in \cite{lijoi&al:14b}. 

In a different perspective, \cite{walker&muliere:03} defined a DDP via predictive schemes for only two random measures $(F_1,F_2)$. Their approach relies on a latent Dirichlet-multinomial process $N$. Other approaches are those of \cite{mueller&al:04} who introduced dependence by considering convex linear combinations of a common measure and idiosyncratic measures for different studies, and \cite{teh&al:06} who introduced the so called hierarchical Dirichlet process by taking $F_j\mid F_0\sim\DP(c,F_0)$ conditionally independent for a set of $j$'s and $F_0\sim\DP(c_0,G_0)$. Distribution properties of this and more general hierarchical processes is studied in \cite{camerlenghi&al:19b}. 

Posterior predictive distributions of a DP induce exchageable sequences of variables whose law is characterised by a P\'olya urn \citep{blackwell&macqueen:73}. A DDP can also be defined via generalisations of the P\'olya urn to induce partially exchangeable sequences. For instance, \cite{caron&al:07} induce dependence in time via a deletion strategy of past urns, whereas \cite{papas&al:16} and \cite{ascolani&al:21} use a Fleming-Viot process to induce dependence among random measures by sharing a common pool of atoms that depend on a latent death process. 

In this article we construct a class of DDP by considering a predictive scheme, as in \cite{walker&muliere:03}, for multiple random measures that have DP marginal distributions and can be temporal and/or spatial dependent. Predictive distributions of our construction induce partial exchangeable sequences defined by a generalised P\'olya urn. 

The content of the rest of the paper is as follows: In Section \ref{sec:model} we define our generalisation and characterise the dependence induced. In Section \ref{sec:post} we use our model as a Bayesian nonparametric prior distribution,  characterise its posterior laws and define a generalised P\'olya urn to define partially exchangeable sequences. We illustrate the use of the model in Section \ref{sec:illust} and conclude in Section \ref{sec:disc}.

\section{Model construction}
\label{sec:model}

Let us start by defining a multinomial process $N$. This is characterised by an integer parameter $c$ and a measure $F$ defined on $(\Omega,\BB)$, in notation $N\sim\MP(c,F)$, such that for any $K>0$ and for any partition $(B_1,\ldots,B_K)$ of $\Omega$, the random vector $(N(B_1),\ldots,N(B_K))\sim\mul(c;F(B_1),\ldots,F(B_K))$, that is, a multinomial distribution with number of trials $c$ and probabilities $F(B_k)$, $k=1,\ldots,K$. Moreover, if we sample $Y_1,\ldots,Y_c$ independently and identically distributed from $F$, then the multinomial process is defined as $N(\cdot)=\sum_{i=1}^c\delta_{Y_i}(\cdot)$. Moreover, if $F\sim\DP(c_0,F_0)$ and conditionally on $F$, $N\mid F\sim\MP(c,F)$, then marginally $N\sim\DMP(c,c_0,F_0)$, that is, a Dirichlet-multinomial process with parameters $(c,c_0,F_0)$.

We now recall the bivariate Dirichlet process of \cite{walker&muliere:03}. Their construction takes $F_1\sim\DP(c_0,F_0)$, $N\mid F_1\sim\MP(c,F_1)$ and $F_2\mid N\sim\DP(c_0+c,(c_0F_0+N)/(c_0+c))$. Then, after marginalising the latent process $N$, the pair $(F_1,F_2)$ is a DDP with marginal distributions $F_t\sim\DP(c_0,F_0)$ for $t=1,2$ and correlation given by $\Cor\{F_1(B),F_2(B)\}=c/(c_0+c)$ for any set $B\subset\Omega$. 

We note that \cite{walker&muliere:03}'s construction is reversible, so we can start by taking 
\begin{equation}
\label{eq:basic}
N\sim\DMP(c,c_0,F_0)\quad\mbox{and}\quad F_t\mid N\sim\DP(c_0+c,(c_0F_0+N)/(c_0+c)), 
\end{equation}
for $t=1,2$ to obtain the same bivariate model such that $F_t\sim\DP(c_0,F_0)$ marginally. Therefore the key aspect to obtain a Dirichlet process marginal distribution is to condition on a latent $N$ whose law is a Dirichlet-multinomial process. 

We extend this idea to multiple processes $\bF=\{F_t\}_{t\in\TT}$ with $\TT$ a finite index set, say $\TT=\{1,\ldots,T\}$ with $T<\infty$. For that we require a set of latent processes $\bN=\{N_t\}$, one for each $t\in\TT$, plus a single latent process $G$ that will play the role of anchor. Let $\partial_t$ be a set of ``neighbours'', in the broad sense, for each $t\in\TT$. For instance, if $t$ denotes time, we can define dependencies among the $F_t$'s of order $q$ as in time series moving average models. In this case $\partial_t=\{t-q,\ldots,t-1,t\}$. We show in  Figure \ref{fig:temporal} a graphical model with temporal dependence of order $q=2$. Moreover, if $t$ denotes a spatial location, then $\partial_t=\{j:\{j\smile t\}\cup \{j=t\}\}$, where ``$\smile$'' denotes actual spatial neighbour. More general definitions of $\partial_t$ can be taken to define seasonal or spatio-temporal models \citep[see][]{nieto:20}. In any case $t\in\partial_t\subset\TT$ for all $t\in\TT$.

Then, the law of the dependent set $\bF$ is characterised by a three level hierarchical model with the following specifications:
\begin{align}
\nonumber
G&\sim\DP(c_0,F_0)\\
\label{eq:DDP}
N_t\mid G&\simind\MP(c_t,G)\\
\nonumber
F_t\mid\bN&\simind\DP\left(c_0+\sum_{j\in\partial_t} c_j\,,\:\frac{c_0 F_0+\sum_{j\in\partial_t} N_j}{c_0+\sum_{j\in\partial_t} c_j}\right)
\end{align}
for $t\in\TT$, where the set $\bc=\{c_t\}$ is such that $c_0\in\RR^+$, $c_t\in\ZZ^+$ for $t>0$, and take $c_t=0$ and $N_t=0$ with probability one (w.p.1) for $t<0$. In notation we say that $\bF\sim\DDP(\bc,F_0)$. 

Properties of construction \eqref{eq:DDP} are given in Proposition \ref{prop:1}. In particular, the correlation induced and the marginal distributions can be obtained in closed form. 

\begin{proposition}
\label{prop:1}
Let $\{F_t\}_{t\in\TT}$ be a set of dependent measures $\DDP(\bc,F_0)$, defined by  \eqref{eq:DDP}. 
\begin{enumerate} 
\item[(i)] The marginal distribution of $F_t$ is $\DP(c_0,F_0)$ for all $t\in\TT$. 
\item[(ii)] The correlation between any pair of measures $(F_t,F_{t'})$ for $t\neq t'\in\TT$ is given by
$$\Cor\{F_t(B),F_{t'}(B)\}=\frac{c_0\left(\sum_{j\in\partial_{t}\cap\partial_{t'}}c_{j}\right)+\left(\sum_{j\in\partial_{t}}c_{j}\right)\left(\sum_{j\in\partial_{t'}}c_{j}\right)}{\left(c_0+\sum_{j\in\partial_{t}}c_{j}\right)\left(c_0+\sum_{j\in\partial_{t'}}c_{j}\right)},$$
for $B\subset\Omega$, and 
$$\Cor\{F_t(B_i),F_{t'}(B_k)\}=-\sqrt{\frac{F_0(B_i)F_0(B_k)}{\{1-F_0(B_i)\}\{1-F_0(B_k)\}}}\Cor(F_t(B_i),F_{t'}(B_i)),$$
for $B_i,B_k\subset\Omega$ such that $B_i\cap B_k=\emptyset$. 
\item[(iii)] If $c_{t}=0$ for all $t\in\TT$ then the $F_t$'s become independent. 
\end{enumerate}
\end{proposition}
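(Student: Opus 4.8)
Throughout write $a_t=c_0+\sum_{j\in\partial_t}c_j$, and recall the two moment facts that drive everything: for a Dirichlet process the mean is the base measure and, evaluated on a set $B$, the variance is $F_0(B)\{1-F_0(B)\}/(c+1)$, while for a multinomial process $\MP(c,G)$ the count $N(B)\mid G$ is $\bin(c,G(B))$. For (i) the plan is to reduce \eqref{eq:DDP} to the reversible bivariate scheme \eqref{eq:basic}. The key structural fact is that multinomial processes with a common base measure are additive: since $N_j\mid G\simind\MP(c_j,G)$ arises by sampling $c_j$ points i.i.d.\ from $G$, the superposition satisfies $\sum_{j\in\partial_t}N_j\mid G\sim\MP\!\big(\sum_{j\in\partial_t}c_j,\,G\big)$. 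Marginalising $G\sim\DP(c_0,F_0)$ then gives $\sum_{j\in\partial_t}N_j\sim\DMP\!\big(\sum_{j\in\partial_t}c_j,\,c_0,\,F_0\big)$, and the third line of \eqref{eq:DDP} is precisely the second stage of \eqref{eq:basic} with the single count $c$ replaced by $\sum_{j\in\partial_t}c_j$. Reversibility of the Walker--Muliere pair then yields $F_t\sim\DP(c_0,F_0)$ after integrating out $\bN$; i.e.\ the marginal carries the precision $c_0$ of the anchor.

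For (ii) the tool is the law of total covariance applied twice. Conditioning on $\bN$ and using that the $F_t$ are conditionally independent there, the inner term vanishes and $\Cov\{F_t(B),F_{t'}(B)\}=\Cov\big[\E\{F_t(B)\mid\bN\},\,\E\{F_{t'}(B)\mid\bN\}\big]$; since $\E\{F_t(B)\mid\bN\}=a_t^{-1}\big(c_0F_0(B)+\sum_{j\in\partial_t}N_j(B)\big)$, this reduces the problem to $(a_ta_{t'})^{-1}\sum_{j\in\partial_t}\sum_{j'\in\partial_{t'}}\Cov\{N_j(B),N_{j'}(B)\}$. Conditioning next on $G$, independence of distinct $N_j$ gives $\Cov\{N_j(B),N_{j'}(B)\}=c_jc_{j'}\V\{G(B)\}$ for $j\neq j'$, whereas the binomial variance contributes the correction $\V\{N_j(B)\}-c_j^2\V\{G(B)\}=c_0c_j\V\{G(B)\}$, using $\V\{G(B)\}=F_0(B)\{1-F_0(B)\}/(c_0+1)$. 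Splitting the double sum into its off-diagonal part, which assembles into the product $\big(\sum_{\partial_t}c_j\big)\big(\sum_{\partial_{t'}}c_j\big)$, and its diagonal part over the overlap $\partial_t\cap\partial_{t'}$, which contributes $c_0\sum_{\partial_t\cap\partial_{t'}}c_j$, produces exactly the stated numerator times $\V\{G(B)\}$. Dividing by $\sqrt{\V\{F_t(B)\}\,\V\{F_{t'}(B)\}}=\V\{G(B)\}$, which is equal across $t$ by (i), cancels the $\V\{G(B)\}$ factor and gives the first correlation.

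The disjoint-set formula follows from the identical argument with two substitutions: within a single process the multinomial cross-covariance $\Cov\{N_j(B_i),N_j(B_k)\mid G\}=-c_jG(B_i)G(B_k)$ replaces the binomial variance, and between processes $\Cov\{G(B_i),G(B_k)\}=-F_0(B_i)F_0(B_k)/(c_0+1)$ replaces $\V\{G(B)\}$. The same bracketed numerator reappears, so $\Cov\{F_t(B_i),F_{t'}(B_k)\}$ equals $-F_0(B_i)F_0(B_k)/(c_0+1)$ times $\Cor\{F_t(B_i),F_{t'}(B_i)\}$; dividing by the now-unequal marginal standard deviations produces the factor $\sqrt{F_0(B_i)F_0(B_k)/[\{1-F_0(B_i)\}\{1-F_0(B_k)\}]}$ together with the minus sign. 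Finally (iii) is immediate: if $c_t=0$ for every $t$ then $\sum_{j\in\partial_t}c_j=0$, since $t\in\partial_t$, so the third line of \eqref{eq:DDP} collapses to $F_t\mid\bN\sim\DP(c_0,F_0)$, a law not depending on $\bN$; conditional independence of the $F_t$ given $\bN$ together with conditional laws free of $\bN$ forces genuine mutual independence, which is stronger than merely reading a zero off the correlation formula.

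I expect the only real obstacle to be the bookkeeping in (ii): getting the two cases of $\Cov\{N_j(B),N_{j'}(B)\}$ correct and cleanly separating the diagonal contribution over $\partial_t\cap\partial_{t'}$ from the off-diagonal product-of-sums. Once the identity $\V\{N_j(B)\}-c_j^2\V\{G(B)\}=c_0c_j\V\{G(B)\}$ and its disjoint-set analogue are in hand, the rest collapses through the equal-variance cancellation inherited from (i).
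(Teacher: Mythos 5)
Your proof is correct and follows essentially the same route as the paper's: additivity of the latent multinomial processes plus reversibility of the Walker--Muliere construction for (i), a double application of the law of total covariance for (ii) (your diagonal/off-diagonal split of the pairwise covariance sum is just a termwise rearrangement of the paper's computation on the summed processes, and your identity $\V\{N_j(B)\}-c_j^2\V\{G(B)\}=c_0c_j\V\{G(B)\}$ is exactly the paper's step $c_j\E\{G(B)(1-G(B))\}=c_0c_j\V\{G(B)\}$), and the same degeneracy argument for (iii). One remark: your derivation yields the marginal $\DP(c_0,F_0)$, which agrees with the paper's own proof and its subsequent discussion of $c_0$ as the common precision; the $\DP(c_t,F_0)$ appearing in the proposition statement is a typo.
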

\begin{proof}
\begin{enumerate}
\item[(i)] We note that conditionally on $G$ the $N_t$ are independent multinomial, so the sum is again multinomial, i.e., $\sum_{j\in\partial_{t}}N_{j}\mid G\sim\MP\left(\sum_{(j)\in\partial_{t}}c_{j},G\right)$. Integrating out $G$, the marginal distribution of $\sum_{j\in\partial_{t}}N_{j}$ is Dirichlet-multinomial with parameters $(\sum_{j\in\partial_{t}}c_{j},c_0,F_0)$. Considering level three in \eqref{eq:DDP} and using \eqref{eq:basic} we obtain $F_t\sim\DP(c_0,F_0)$ marginally for $t\in\TT$. 
\item[(ii)] We note that for a single set $B\in\Omega$, $F_t(B)$, $N_t(B)$ and $G(B)$ in \eqref{eq:DDP} simplify to beta, binomial and beta distributions, respectively. We rely on conditional independence properties and the iterative covariance formula. Then for the first part, $\Cov(F_t(B),F_{t'}(B))$ $=\E\{\Cov(F_t(B),F_{t'}(B)\mid\bN)\}+\Cov\{\E(F_t(B)\mid\bN),\E(F_{t'}(B)\mid\bN)\}$. The first term in the sum becomes zero since $F_t(B)$'s are conditionally independent given $\bN$, for $t\neq t'$. The second term, after removing the additive constants of the expected values, is rewritten as $\Cov\left(\sum_{j\in\partial_{t}}N_{j}(B),\sum_{j\in\partial_{t'}}N_{j}(B)\right)$ divided by $(c_0+\sum_{j\in\partial_{t}}c_{j})(c_0+\sum_{j\in\partial_{t'}}c_{j})$. Concentrating in the numerator, and using the iterative covariance formula for a second time, we get $\E\{\Cov(\sum_{j\in\partial_{t}}N_{j}(B),\sum_{j\in\partial_{t'}}N_{j}(B)\mid G)\}+\Cov\{\E(\sum_{j\in\partial_{t}}N_{j}(B)\mid G),\E(\sum_{j\in\partial_{t'}}N_{j}(B)\mid G)\}$. The first term, after separating the sums in the common part, reduces to $\E\{\V(\sum_{j\in\partial_{t}\cap\partial_{t'}}N_{j}(B)\mid G)\}$ which becomes $(\sum_{j\in\partial_{t}\cap\partial_{t'}}c_{j})\E\{G(B)(1-G(B))\}$. After computing the expected value, this can be re-written as $(\sum_{j\in\partial_{t}\cap\partial_{t'}}c_{j})c_0\V(G(B))$. The second term, after computing the expectations becomes $(\sum_{j\in\partial_{t}}c_{j})(\sum_{j\in\partial_{t'}}c_{j})\V(G(B))$. Finally, using $(i)$, we note that $\V(F_{t}(B))=\V(F_{t'}(B))=\V(G(B))$, so dividing the covariance between the product of the standard deviations we obtain the result. For the second part, we proceed analogously, the covariance is the same as before but $\V(G(B))$ is replaced by $\Cov\{G(B_i),G(B_k)\}=-F_0(B_i)F_0(B_k)/(c_0+1)$. Dividing by the product of standard deviations we get the result. 
\item[(iii)] We note that if $c_{t}=0$ then $N_{t}=0$ w.p.1. If we do this for all $t\in\TT$ then it is straightforward to see that the measures $F_{t}$'s become independent. 
\end{enumerate} \vspace{-5mm}\end{proof}

\bigskip
Many things can be concluded from Proposition \ref{prop:1}. Parameter $c_0$ and $F_0$ are the precision and centring measure parameters, respectively, of all Dirichlet process marginal distributions. Parameters $c_t$ for $t>0$ control the strength of the correlation between any two elements $F_t$ and $F_{t'}$, together with the definition of the neighbours $\partial_t$. The correlation is stronger, if the two elements share more parameters. Some locations can be more influential than others if their corresponding $c_{t}$ parameter is larger. Moreover, the set $\{F_{t}\}$ becomes strictly stationary if the parameters $c_t=c$ for $t>0$, which implies that the correlation for $t\neq t'\in\TT$ simplifies to $$\Cor(F_{t}(B),F_{t'}(B))=\frac{r_{\partial_t\cap\partial_{t'}}c_0c+r_{\partial_t}r_{\partial_{t'}}c^2}{(c_0+r_{\partial_t}c)(c_0+ r_{\partial_{t'}}c)},$$
where $r_{\partial_t}$ denotes the number of elements in the set $\partial_t$. 

Let us consider a set with only two elements, $T=2$. There are several ways of defining dependencies between $F_1$ and $F_2$. For instance, an order 1 moving average time series model would have $\partial_1=\{1\}$ and $\partial_2=\{1,2\}$. Then, for a set $B\in\Omega$, $\Cor\{F_1(B),F_2(B)\}=c_1/(c_0+c_1)$, which is the same correlation induced by the bivariate DDP of \cite{walker&muliere:03}. Alternatively, nothing constrains us to define circular dependencies, say $\partial_1=\partial_2=\{1,2\}$. In this case the correlation induced becomes $\Cor\{F_1(B),F_2(B)\}=(c_1+c_2)/(c_0+c_1+c_2)$.

\section{Posterior characterisation and P\'olya urn}
\label{sec:post}

Let us assume that we observe a partially exchangeable sequence in the sense of \cite{definetti:72}. That is, for each $t\in\TT$, we observe a sample of size $m_t$, say $\bX_t=\{X_{1,t},\ldots,X_{m_t,t}\}$ such that $X_{i,t}\mid F_t\simind F_t$ for $i=1,\ldots,m_t$. Moreover, the prior distribution for the set $\bF=\{F_t\}$ is defined by the $\DDP(\bc,F_0)$ given in \eqref{eq:DDP}. 

Since $F_t\mid\bN$'s are independent Dirichlet processes, the conditional posterior distribution for each $F_t$ given $(\bN,G,\bX)$, does not depend on $G$, and can be straightforwardly derived \citep{ferguson:73} 
\begin{equation}
\label{eq:postf}
F_t\mid\bN,\bX_t\sim\DP\left(c_0+\sum_{j\in\partial_t}c_j+m_t\,,\frac{c_0F_0+\sum_{j\in\partial_t}N_j+m_t\widehat{F}_{t}}{c_0+\sum_{j\in\partial_t}c_j+m_t}\right),
\end{equation}
where $\widehat{F}_{t}(\cdot)=\frac{1}{m_t}\sum_{i=1}^{m_t}\delta_{X_{i,t}}(\cdot)$ is the empirical distribution function of sample $\bX_t$, for $t\in\TT$. 

To produce posterior inferences we also need to provide the full conditional distributions for the processes $N_t$ and $G$. For an arbitrary partition $(B_1,\ldots,B_K)$, the posterior conditional distribution of $N_t$ given $(\bF,G,\bX)$, does not depend on $\bX$, and is given by
\begin{eqnarray}
\nonumber
\P\left\{N_t(B_k)=n_{t,k},k=1,\ldots,K\mid\bF,G\right\}\propto\hspace{6cm} \\
\label{eq:postn}
\hspace{2cm}\prod_{k=1}^K\frac{\left\{G(B_k)\prod_{j\in\varrho_t}F_j(B_k)\right\}^{n_{t,k}}}{n_{t,k}!\prod_{j\in\varrho_t}\Gamma\left(c_0F_0(B_k)+\sum_{l\in\partial_j}n_{l,k}\right)}I\left(\sum_{k=1}^K n_{t,k}=c_t\right),
\end{eqnarray}
where $\varrho_t$ is the reversed set of neighbours of $t$, that is, $\varrho_t=\{j:t\in\partial_j\}$. 

Finally, the posterior conditional law for the process $G$ given $(\bN,\bF,\bX)$, does not depend on $\bX$, is another DP of the form
\begin{equation}
\label{eq:postg}
G\mid\bN,\bF\sim\DP\left(c_0+\sum_{t=1}^T c_t\,,\frac{c_0F_0+\sum_{t=1}^T N_t}{c_0+\sum_{t=1}^T c_t}\right).
\end{equation}

With conditional posterior laws $[\bF\mid\bN,\bX]$, $[\bN\mid\bF,G]$ and $[G\mid\bN,\bF]$, given in \eqref{eq:postf}, \eqref{eq:postn} and \eqref{eq:postg}, respectively, we can implement a Gibbs sampler \citep{smith&roberts:93} to obtain posterior summaries. In practice, we choose an arbitrary partition $B_1,\ldots,B_K$ of size $K$ to approximate the paths of the processes. The larger the $K$, the more precise the approximation is. However, for a larger $K$ computational time increases drastically and the mixing of the chain, specially for $N_t(B_k)$'s, is slower since $\sum_{k=1}^K N_t(B_k)=c_t$ w.p.1. 

Posterior predictive distribution of $F_t$ can be easily obtained from \eqref{eq:postf}, conditionally on the latent processes $\bN$. If we re-express the latent processes $N_t$ in terms of random variables $Y_{1,t},\ldots,Y_{c_t,t}$, where $Y_{i,t}\mid G\sim G$, such that $N_t(\cdot)=\sum_{i=1}^{c_t}\delta_{Y_{i,t}}(\cdot)=c_t\widehat{G}_{t}(\cdot)$ then
\begin{equation}
\label{eq:pred}
\E\left(F_t\mid\bN,\bX_t\right)=\frac{c_0F_0+\sum_{j\in\partial_t}c_j\widehat{G}_{j}+m_t\widehat{F}_{t}}{c_0+\sum_{j\in\partial_t}c_j+m_t}.
\end{equation}
Expression \eqref{eq:pred} provides an interesting interpretation of the updated mechanism of our construction. Posterior predictive distribution for $F_t$ is a weighted average of distributions $F_0$ and two sets of empirical distributions: $\widehat{G}_j$ coming from latent data $Y_{i,j}$ for $j\in\partial_t$; and $\widehat{F}_{t}$ coming from actual observed data $X_{i,t}$. The weights are the precision $c_0$, the sample sizes $c_j$ for $j\in\partial_t$, and $m_t$. So, as in the posterior distribution of a DP, parameter $c_0$ is also interpreted as prior sample size. 

Similarly to \cite{blackwell&macqueen:73} and \cite{walker&muliere:03}, we can define the partially exchangeable sequence of observable variables $X_{i,t}$ with P\'olya urns. The dependence mechanism induced by \eqref{eq:DDP} assumes that each urn $t$ starts with a composition of balls $Y_{i,j}$, for $i=1,\ldots,c_j$ and $j\in\partial_t$. Some of these starting balls are shared among urns $t$ and $t'$, according to whether $\partial_t\cap\partial_{t'}$ is empty or not. Then the first draw $X_{1,t}$ is either a new ball coming from $F_0$ with probability $c_0/(c_0+\sum_{j\in\partial_t}c_j)$ or one of the balls in the urn $\bY=\{Y_{i,j}\}$ with probability $\sum_{j\in\partial_t}c_j/(c_0+\sum_{j\in\partial_t}c_j)$. In other words 
$$X_{1,t}\mid\bY\sim\frac{c_0F_0+\sum_{j\in\partial_t}c_j\widehat{G}_{j}}{c_0+\sum_{j\in\partial_t}c_j}.$$
The structure of the urn now changes adding the first ball $X_{1,t}$. The second ball is either a new ball coming from $F_0$, one of the starting balls $\{Y_{i,j}\}$ or the first ball $X_{1,t}$, that is
$$X_{2,t}\mid X_{1,t},\bY\sim\frac{c_0F_0+\sum_{j\in\partial_t}c_j\widehat{G}_{j}+\delta_{X_{1,t}}}{c_0+\sum_{j\in\partial_t}c_j+1}.$$
We add the already extracted balls to the urn. In general, for the $m_t+1$ ball we have 
$$X_{m_t+1,t}\mid X_{1,t},\ldots,X_{m_t,t},\bY\sim\frac{c_0F_0+\sum_{j\in\partial_t}c_j\widehat{G}_{j}+m_t\widehat{F}_{t}}{c_0+\sum_{j\in\partial_t}c_j+m_t}.$$

\section{Illustration}
\label{sec:illust}

The Official Statistics Institute of Mexico constructs an economic activity index (ITAEE) for each of the 32 states of the country and it is reported every three months. The last report of this index (https://www.inegi.org.mx/temas/itaee/) consists in destationalized  values, where seasonal peaks have been removed, of constant prices of the year 2013. For the purpose of our analysis we consider the indexes of the last six years available, that is, from the first trimester of year 2015 to the third trimester of year 2020. In total we have samples $\{X_{i,t}\}$ for $i=1,\ldots,m_t$ of sizes $m_t=32$, for $t=1,\ldots,T$ with $T=23$ trimesters. 

The data is reported in Figure \ref{fig:series}. We can clearly see that in the second trimester of 2020 there is a drop for all states due to the Covid-19 pandemic. Previous to year 2020, the state of Baja California Sur (top green) showed its highest value in the third trimester of 2018. On the other hand, states like Campeche (bottom blue) and Tabasco (bottom green) showed a decreasing trend in its economy. 

The objective of our analysis is to characterise the variability of the economy in the whole country. For that we use our $\DDP$ model with order $q$ temporal dependencies such that the neighbouring sets are defined as $\partial_t=\{t-q,\ldots,t-1,t\}$. We took $c_0=0.1$ and $F_0(x)=\Phi((x-\mu_0)/\sigma_0)$ with $\Phi$ the standard normal cumulative distribution function. If we denote by $X_{(1)}$ and $X_{(m)}$ the sample minimum and maximum, respectively, then we took $\mu_0=(X_{(1)}+X_{(m)})/2$ and $\sigma_0=(X_{(m)}-X_{(1)})/7$. We defined a partition of size $K=50$ with $B_k=(b_{k-1},b_{k}]$ and $b_0=X_{(1)}$ and $b_k=b_{k-1}+(X_{(m)}-X_{(1)})/K$. The dependence parameters $c_t$ were assumed constant across time. For $c_t$ and for the order of dependence $q$ we consider a range of different values to compare. In particular we took $c_t\in\{5,10,15,20\}$ and $q\in\{1,2,\ldots,9\}$. 

To choose among the different model specifications, we considered two statistics. The logarithm of the pseudo marginal likelihood \citep{geisser&eddy:79} defined as $$\mbox{LPML}=\frac{1}{T}\sum_{t=1}^T\frac{1}{m_t}\sum_{i=1}^{m_t}\mbox{CPO}_{i,t}\quad\mbox{with}\quad\mbox{CPO}_i=\left\{\frac{1}{L}\sum_{l=1}^L\frac{1}{f_t^{(l)}(x_{i,t})}\right\}^{-1},$$ where $\mbox{CPO}_{i,t}$ is the Monte Carlo approximation of the conditional predictive ordinate and $l$ denotes the iteration. The second statistic is the L-measure \citep{ibrahim&laud:94} which is a summary between variance and bias of the predictive distribution. This is defined as $$\mbox{LMEA}(\nu)=\frac{1}{TK}\sum_{t=1}^T\sum_{k=1}^{K}\V\{F_t(B_k)\mid\bx\}+\frac{\nu}{TK}\sum_{t=1}^T\sum_{k=1}^K\left[E\{F_t(B_k)\mid\bx\}-\widehat{F}_t(B_k)\right]^2$$
with $\nu\in[0,1]$. 

We implemented a Gibbs sampler with the full conditional distributions described in Section \ref{sec:post}. Sampling from \eqref{eq:postf} and \eqref{eq:postg} is straightforward since the finite dimensional distributions for $(F_t(B_1),\ldots,F_t(B_K))$ and $(G(B_1),\ldots,G(B_K))$, conditionally on the other processes, become Dirichlet distributions. However, sampling from \eqref{eq:postn} requires the implementation of a Metropolis-Hastings step \citep{tierney:94} and sampling each of the $N_t(B_k)$, one at a time, for $k=1,\ldots,K-1$ and $t=1,\ldots,T$. This procedure induces highly autocorrelated chains. To overcome this effect we ran large chains with thinning. In particular we took 100,000 iterations with a burn-in of 5,000 and a thinning of one of every 25$^{th}$ iteration.

The goodness of fit statistics, LPML and LMEA for $\nu=1/2$, for the different combinations of $q$ and $c_t$ are reported in Figure \ref{fig:gof}. We recall that higher LPML and smaller LMEA denote a better model. Considering LPML statistic, we note that for $c_t>5$, the best model is obtained with $q=1$, however, for $c_t=15$, models with $q$ between 5 and 8 also obtain a high LPML. Considering now LMEA statistic, for a fixed $c_t$ the best model is obtained for $q>1$. In particular, for $c_t=15$ the best model is that with $q=6$, so combining both goodness of fit measures, we take this latter as our best model. Since the data are observations every three months, having a temporal dependence of order 6 means that the distribution of the economy in a given time depends on the economy of up to around one year and a half ago.

For the best fitting model we characterise posterior predictive distributions and report the posterior mean as point estimates for each of the $F_t$'s. These are included in Figure \ref{fig:cdf}. In this graph we use darker colours to indicate larger (more recent) times. For early times (small $t$'s) the cumulative distribution functions (CDF) are more concentrated around the $x$ values of 110, whereas for more recent times (large $t$'s) the CDFs show more variability. In particular, there are two times that show a distinctive behaviour and assign larger probabilities to smaller $x$ values. These two correspond to the second and third trimester of 2020 where the Mexican economy was shaken by the Covid-19 pandemic. 

Finally, we also show in Figure \ref{fig:cdf} in red colour, the posterior mean of the anchoring process $G$. It can be interpreted as the overall mean behaviour of the Mexican economy in the period of study.

\section{Discussion}
\label{sec:disc}

We have introduced a collection of dependent Dirichlet processes via a hierarchical model. Three levels were needed to ensure that the marginal distribution for each element is a Dirichlet process. Temporal, spatial or temporal-spatial dependencies are possible. Other types of dependencies, like circular dependencies mentioned at the end of Section \ref{sec:model}, are also possible. 

Posterior inference of the collection of processes is possible when used as Bayesian nonparametric prior distributions. This requires an easy to implement Gibbs sampler. 

One of the common uses of Dirichlet processes is to define mixtures for model based clustering, exploiting the discreteness of the Dirichlet process. We can also use our $\DDP$ for that purpose. Say $X_{i,t}\mid\theta_{i,t}\sim h(x_{i,t}\mid\theta_{i,t})$, with $h$ a probability density, then $\theta_{i,t}\mid F_t\simind F_t$ and $\bF\sim\DDP(\bc,F_0)$ for $i=1,\ldots,m_t$ and $t=1,\ldots,T$. Clusters obtained for each time $t$ will be dependent, according to the chosen definition of the neighbouring sets $\partial_t$. Studying the performance of this mixture model is worthy, but it is left for future work.

\section*{Acknowledgements}

The author acknowledges support from \textit{Asociaci\'on Mexicana de Cultura, A.C.}

\bibliographystyle{natbib}

\newpage

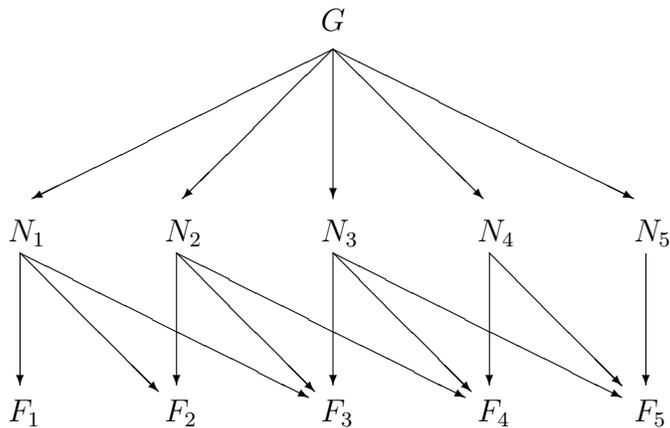
\begin{figure}
\setlength{\unitlength}{0.8cm}
\begin{center}
\begin{picture}(20,8)
\put(9.8,7.6){$G$} 
\put(10,7.3){\vector(-2,-1){5}}
\put(10,7.3){\vector(-1,-1){2.5}}
\put(10,7.3){\vector(0,-1){2.5}}
\put(10,7.3){\vector(1,-1){2.5}}
\put(10,7.3){\vector(2,-1){5}}
\put(4.6,4.1){$N_1$}
\put(7.2,4.1){$N_2$}
\put(9.8,4.1){$N_3$}
\put(12.4,4.1){$N_4$}
\put(15,4.1){$N_5$}
\put(4.6,1.1){$F_1$}
\put(7.2,1.1){$F_2$}
\put(9.8,1.1){$F_3$}
\put(12.4,1.1){$F_4$}
\put(15,1.1){$F_5$}
\put(4.8,3.9){\vector(0,-1){2.2}}
\put(4.8,3.9){\vector(1,-1){2.3}}
\put(4.8,3.9){\vector(2,-1){4.8}}
\put(7.4,3.9){\vector(0,-1){2.2}}
\put(7.4,3.9){\vector(1,-1){2.3}}
\put(7.4,3.9){\vector(2,-1){4.8}}
\put(10,3.9){\vector(0,-1){2.2}}
\put(10,3.9){\vector(1,-1){2.3}}
\put(10,3.9){\vector(2,-1){4.8}}
\put(12.6,3.9){\vector(0,-1){2.2}}
\put(12.6,3.9){\vector(1,-1){2.2}}
\put(15.2,3.9){\vector(0,-1){2.2}}
\end{picture}
\end{center}
\vspace{-1cm}
\caption{Graphical representation of temporal dependence of order $q=2$.}\label{fig:temporal} 
\end{figure}

\begin{figure}
\centerline{\includegraphics[scale=0.8]{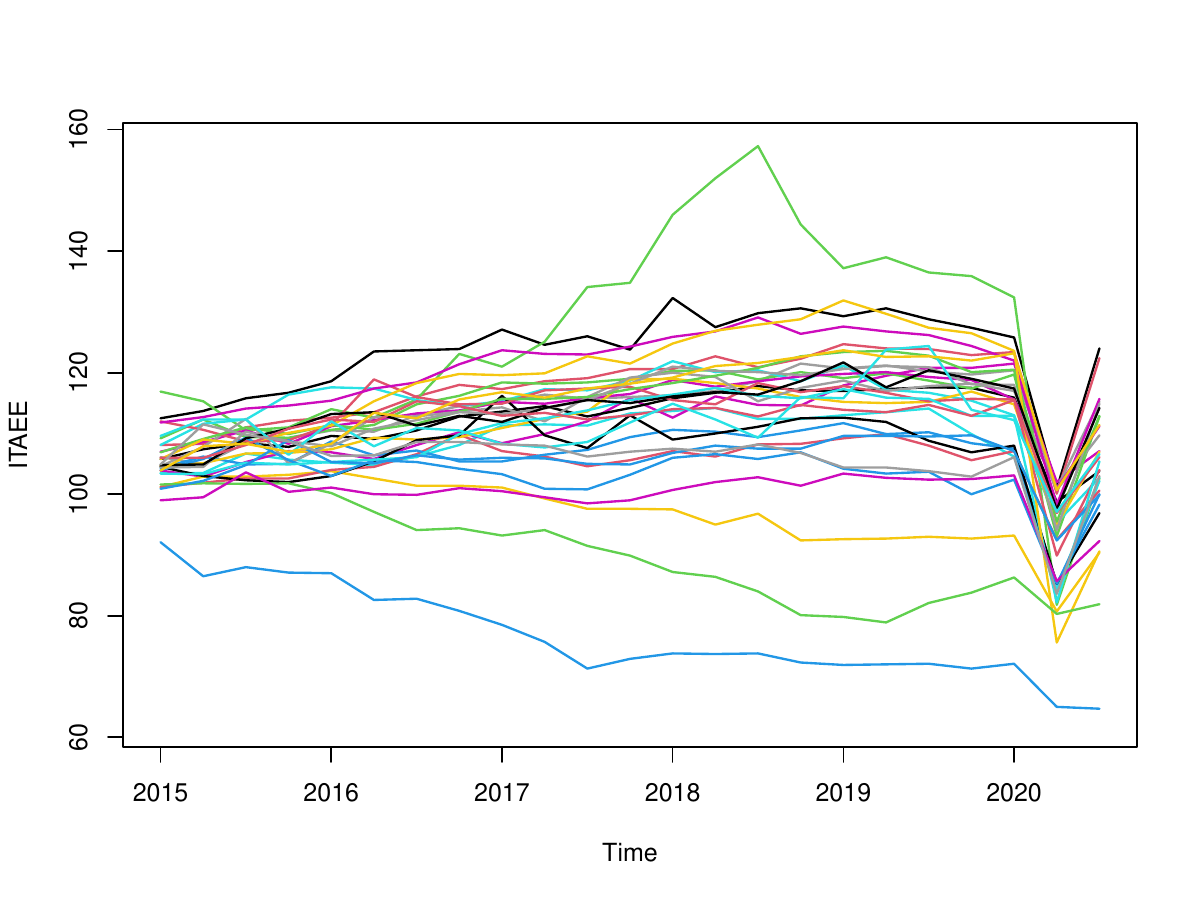}}
\vspace{-2mm}
\caption{{\small Time series of ITAEE index for the 32 states of Mexico from the first trimester of 2015 to the third trimester of 2020.}}
\label{fig:series}
\end{figure}

\begin{figure}
\centerline{\includegraphics[scale=0.69]{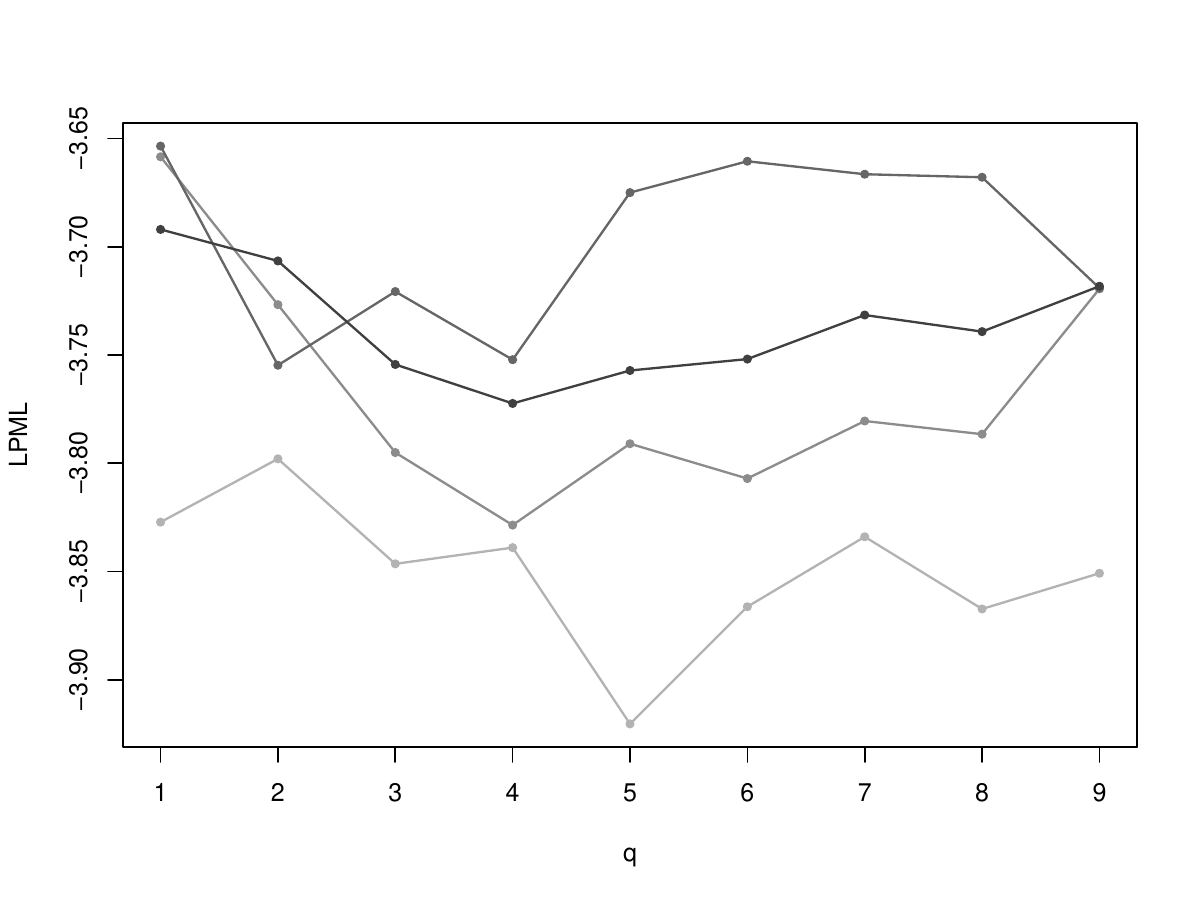}}
\vspace{-7mm}
\centerline{\includegraphics[scale=0.69]{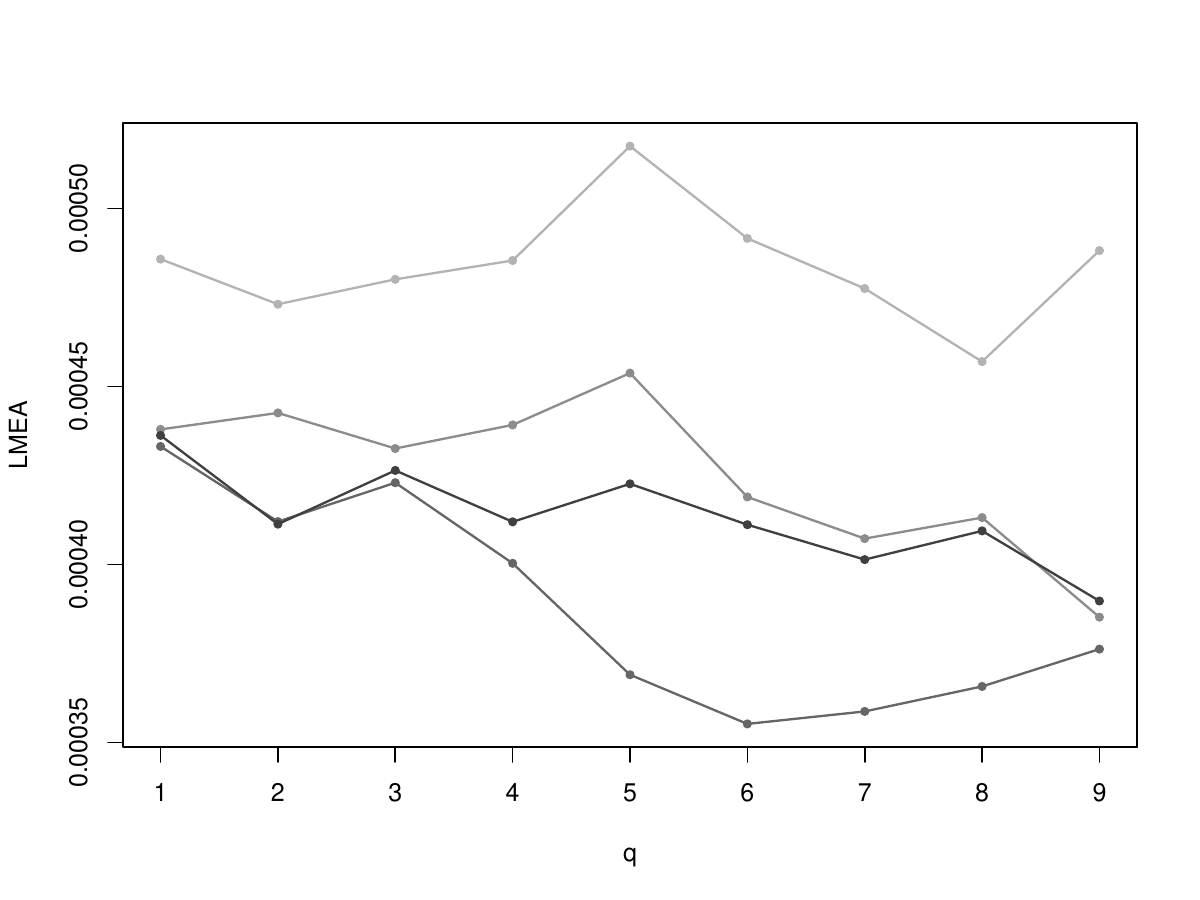}}
\vspace{-2mm}
\caption{{\small LPML and LMEA statstics for $q=1,\ldots,9$ and $c_t=5,10,15,20$. Darker lines correspond to larger values of $c_t$.}}
\label{fig:gof}
\end{figure}

\begin{figure}
\centerline{\includegraphics[scale=0.8]{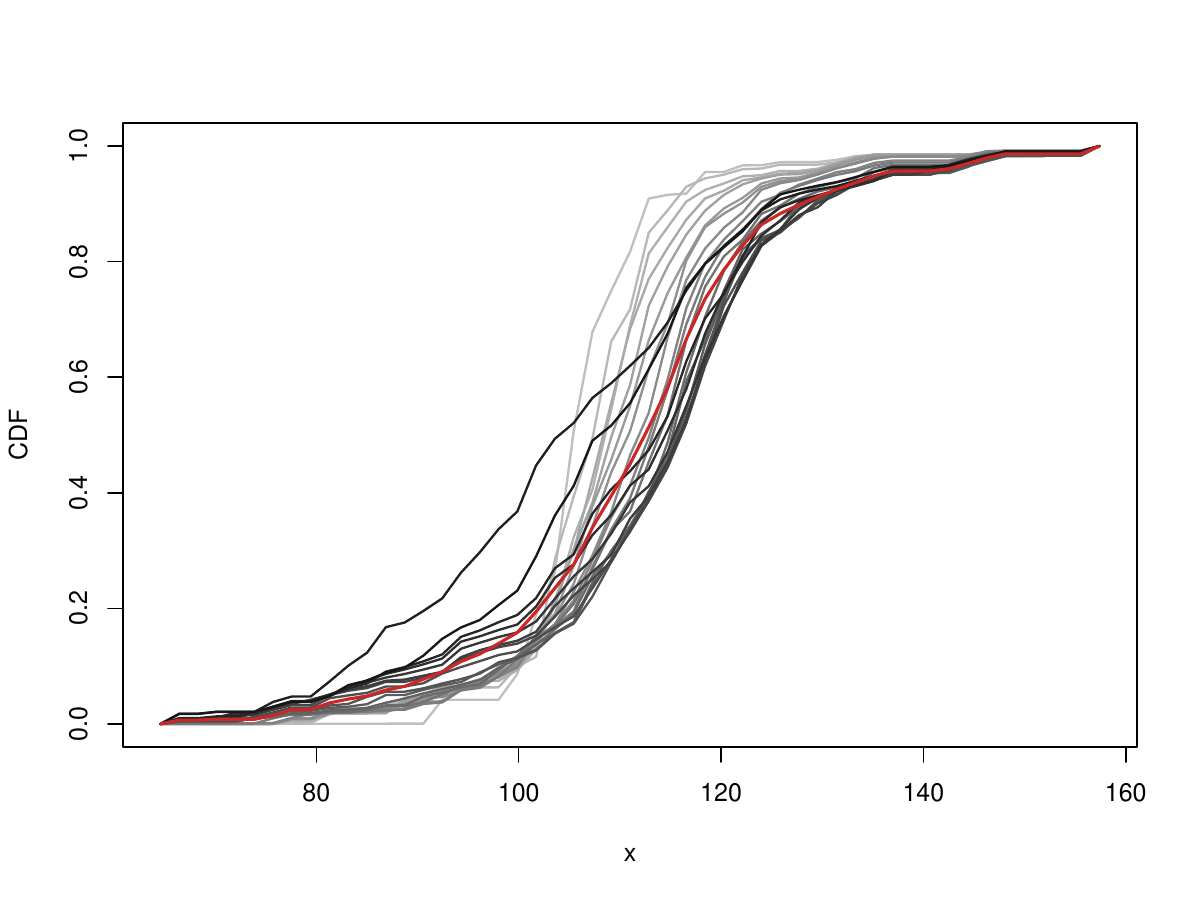}}
\vspace{-2mm}
\caption{{\small Posterior estimates of $F_t$ for $t=1,\ldots,23$. Darker lines correspond to larger times. Posterior estimate of $G$ shown in red.}}
\label{fig:cdf}
\end{figure}

\end{document}